\documentclass[11pt,a4paper]{article}

\usepackage[T1]{fontenc}
\usepackage[utf8]{inputenc}
\usepackage[a4paper,margin=25mm]{geometry}
\usepackage{amsmath,amssymb,amsthm}
\usepackage{graphicx}
\usepackage{natbib}
\usepackage{longtable,booktabs,makecell}
\usepackage{xurl}
\usepackage{microtype}

\newtheorem{theorem}{Theorem}
\newtheorem{proposition}[theorem]{Proposition}
\newtheorem{biomlemma}[theorem]{Lemma}

\title{Validity of MMRM-based hypothesis testing under missing-not-at-random mechanisms}

\author{
Kazushi Maruo$^{1,\dagger,*}$,
Ryota Ishii$^{1,\dagger,*}$,
Yusuke Yamaguchi$^{2}$,\\
Keisuke Hanada$^{3}$,
Naoki Isogawa$^{4}$,
and Masahiko Gosho$^{1}$\\[0.6em]
\small $^{1}$Department of Biostatistics, Institute of Medicine, University of Tsukuba,\\
\small 1-1-1 Tennodai, Tsukuba, Ibaraki 305-8575, Japan\\
\small $^{2}$Quantitative Sciences and Evidence Generation, Astellas Pharma Global Development Inc.,\\
\small 2375 Waterview Drive, Northbrook, Illinois 60062, USA\\
\small $^{3}$Department of Biostatistics, Wakayama Medical University,\\
\small 881-1 Kimiidera, Wakayama 641-8509, Japan\\
\small $^{4}$Biometrics \& Data Science, UCB Japan Co., Ltd.,\\
\small Shinjuku Grand Tower 13F, 8-17-1 Nishi-Shinjuku, Shinjuku-ku, Tokyo 160-0023, Japan\\[0.6em]
\small $^{\dagger}$These authors contributed equally to this work.\\
\small $^{*}$Co-corresponding authors:
Kazushi Maruo (\texttt{maruo@md.tsukuba.ac.jp});\\
Ryota Ishii (\texttt{rishii@md.tsukuba.ac.jp})
}

\date{}

\begin{document}

\maketitle

\begin{abstract}
In randomized clinical trials with longitudinal continuous outcomes, missing-not-at-random (MNAR) missingness often motivates conservative alternatives to mixed models for repeated measures (MMRM). Such caution is important for estimation, but estimation and testing need not require identical assumptions. Moreover, overly conservative primary analyses may reduce power, increase required sample size, and raise trial costs. We investigated the validity of MMRM-based testing under the global null of identical longitudinal outcome distributions across groups. Because valid testing minimally requires treatment-effect estimators to converge to the null under the null hypothesis, we investigated sufficient conditions for this property. We introduced a proportional observation condition requiring ratios of observation probabilities relative to a reference group, conditional on the full outcome vector, to be outcome-independent, and showed that, with arbitrary post-baseline visits and monotone missingness, this condition is sufficient for convergence to the null value. The condition allows observation to depend on unobserved outcomes and permits between-group differences in overall observation probabilities through outcome-independent dropout, making it clinically interpretable while accommodating outcome-dependent MNAR missingness. Synthetic and data-based bootstrap simulations showed negligible bias and empirical test sizes near 0.05, including nonmonotone missingness. Thus, MNAR missingness does not by itself imply that a more conservative primary testing procedure is required. This result does not justify treatment-effect estimation under alternatives, which still requires estimand-based interpretation and sensitivity analyses.
\end{abstract}

\noindent\textbf{Keywords:} Randomized clinical trial; Global null hypothesis; Longitudinal data analysis; Mixed effect model; Proportional observation condition; Type I error.

\section{Introduction}

Missing data are almost unavoidable in clinical trials in which outcomes
are measured repeatedly over time. Participants may discontinue follow-up
because of insufficient efficacy, worsening disease status, adverse events,
treatment burden, or other reasons, and the resulting missing outcomes can
affect the validity and interpretation of statistical analyses. Since the
publication of the ICH E9(R1) addendum, the handling of missing data has
increasingly been considered in relation to estimands, intercurrent events,
and sensitivity analyses \citep{ich_e9r1}. In particular, increasing
attention has been paid to whether the assumptions underlying the primary
analysis are compatible with the estimand of interest and whether the
robustness of conclusions to departures from those assumptions has been
adequately assessed.

The mixed model for repeated measures (MMRM) is widely used for the primary analysis
of longitudinal continuous outcomes in randomized clinical trials
\citep{mallinckrodt_etal_01}. The MMRM method uses available post-baseline
measurements without explicit imputation of missing outcomes. 
A commonly used implementation of MMRM for longitudinal continuous outcomes in randomized clinical trials combines restricted maximum likelihood estimation with an unstructured within-subject covariance matrix and Kenward--Roger inference for fixed effects \citep{kenward_roger_97}.
Under the missing-at-random (MAR) assumption, likelihood-based inference using the
MMRM method provides valid inference for mean treatment differences at
prespecified visits \citep{little_rubin}. When missingness depends on
unobserved outcomes, however, the missing-data mechanism may be missing not
at random (MNAR), and this likelihood-based justification no longer
generally applies \citep{little_rubin, molenberghs_kenward, gosho_etal_17}.

A variety of approaches have therefore been proposed to assess the
consequences of departures from MAR, including pattern-mixture models,
selection models, delta-adjustment methods, reference-based imputation, and
tipping-point analyses
\citep{little_93,little_95,molenberghs_kenward,
carpenter_kenward,carpenter_etal_13,mallinckrodt_etal_13}.
These approaches are particularly important when the objective is to
estimate and interpret the magnitude of a treatment effect under a specified
estimand. More generally, concern about possible MNAR missingness tends to
favor conservative assumptions and analysis strategies, because the
distribution of unobserved outcomes cannot be identified from the observed
data alone. Such caution is often appropriate for treatment effect
estimation. However, excessive conservatism can reduce statistical power,
increase the required sample size, and adversely affect the feasibility and
efficiency of confirmatory clinical trials.

The requirements for valid treatment effect estimation and those for valid
hypothesis testing are not necessarily identical. In a confirmatory trial,
a fundamental requirement of a hypothesis test is control of the type I
error rate. This concerns the behavior of the test under the null hypothesis,
rather than the bias or interpretability of the treatment effect estimator
under every possible alternative. Thus, even when the MMRM method cannot
generally be justified for treatment effect estimation under MNAR missingness,
MMRM-based hypothesis testing may remain valid under some MNAR mechanisms.
If such mechanisms can be characterized, the mere possibility of MNAR
missingness need not necessarily imply that a more conservative testing
procedure is required.

In this study, we focus on the global null hypothesis that the full joint
distribution of the longitudinal outcomes is identical between treatment
groups. This hypothesis is stronger than a pointwise null hypothesis
specifying a zero mean difference at a particular visit. We investigate
whether there exist MNAR mechanisms under which, despite dependence of the
observation process on potentially unobserved outcomes, the MMRM treatment
effect estimator remains consistent for the null value under the global
null hypothesis.

For MMRM-based hypothesis testing to be valid, two components need to be
considered. First, the treatment effect estimator should be centered at the null value
under the null hypothesis. Second, its variance estimator should
provide valid inference. The second component has been investigated by
Ishii et al.\cite{ishii_etal_24}, who showed that a robust variance estimator for MMRM
treatment effect estimators can remain consistent in the presence of missing
data, including MNAR missingness, and model misspecification. What remains
less clear is the first component: under what MNAR mechanisms does the MMRM
treatment effect estimator itself remain consistent for the null value?

We therefore investigate conditions under which the MMRM treatment effect estimator remains consistent for the null value under MNAR missingness. We derive a sufficient condition under the global null hypothesis and examine its interpretation in terms of the relationship between the longitudinal outcomes and the observation process. We further consider whether the resulting condition is compatible with plausible missing-data mechanisms.

Because the theoretical result is asymptotic, we complement it with two simulation studies to evaluate finite-sample performance. A fully synthetic simulation study examines the behavior of the MMRM method across a wide range of finite-sample settings, including settings not directly covered by the theoretical result, particularly nonmonotone missingness. A data-based bootstrap simulation using longitudinal data from an actual randomized clinical trial further evaluates finite-sample performance while preserving empirically observed outcome distributions, within-subject associations, and missing-data patterns.

\section{Consistency under a proportional observation condition}
\label{sec:consistency}

\subsection{General setting and proportional observation condition}

Consider a longitudinal study with \(T\) post-baseline time points and $K \geq 2$ groups. Let
\(\mathbf{Y}=(Y_1,\ldots,Y_T)^\top\) denote the full response vector, and let
$G \in \{1,\ldots,K\}$ denote the group indicator.
The MMRM method is based on the following normal marginal model:
\[
\mathbf{Y}\mid G = g
\sim
N(\mu^{(g)},\Sigma),
\]
where \(\mu^{(g)}=(\mu_1^{(g)},\ldots,\mu_T^{(g)})^\top\) is the
group-specific mean vector in group \(g\).
The covariance matrix
\(\Sigma\)
is assumed to be unknown, positive definite, and unstructured. 
The mean vector
\[
\mu
=
((\mu^{(1)})^\top, \ldots, (\mu^{(K)})^\top)^\top
\]
is estimated from the observed-data likelihood together with an estimator
of the within-subject covariance matrix.

Let \(y\) denote a realized value of the full response vector
\(\mathbf{Y}\), and let \(f(y\mid G=g)\) denote the true full-data density
in group \(g\). We consider the global null hypothesis

\[
H_0:
f(y\mid G=1)
= \cdots = 
f(y\mid G=K)
\quad
\text{for all }y.
\]
Thus, under \(H_0\), the full outcome distribution is identical across groups.

Under the global null hypothesis, treatment has no effect on the
longitudinal outcome distribution. In such a setting, it may be reasonable
to expect that the way in which the outcome itself affects the probability
of observation is similar across groups. This does not imply,
however, that the overall observation probabilities must be identical.
Treatment may affect dropout through other pathways, such as adverse events
or treatment burden, even when it has no effect on the outcome of interest.

Motivated by this distinction, we consider a condition that allows the
probability of observation to depend on the full outcome vector, including
potentially unobserved outcomes, while allowing group differences in the overall probability of observation. The essential requirement is
that the outcome-dependent component of the observation process is common
across groups up to a proportional factor.

Let \(\mathcal{R}_t\) denote the observation indicator for \(Y_t\), where \(\mathcal{R}_t=1\)
if \(Y_t\) is observed and \(\mathcal{R}_t=0\) otherwise. Taking group 1 as the
reference group, we define the proportional observation condition as follows. For each
time point \(t=1,\ldots,T\) and group $g = 2,\ldots,K$, there exists a constant
\(c_{g,t}>0\) such that
\[
\Pr(\mathcal{R}_t=1\mid\mathbf{Y}=y,G=g)
=
c_{g,t}
\Pr(\mathcal{R}_t=1\mid\mathbf{Y}=y,G=1)
\quad
\text{for all }y.
\]
This condition does not require MAR. The probability of observation may
depend on any component of the full response vector, including potentially
unobserved outcomes, and the missingness mechanism may therefore be MNAR.
Instead, the condition requires the ratio of the observation probability in
each group to that in the reference group to be constant with respect to the
full response vector.

\subsection{Main result for monotone missingness}

For the theoretical result, we focus on monotone missingness.
We restrict attention to subjects with at least one post-baseline
observation (i.e., \(\mathcal{R}_t = 1\) for some \(t\)). This restriction is
consistent with ICH E9 (Section 5.2.1), which identifies lack of any post-randomization data
as one of the limited circumstances that might lead to exclusion of randomized
subjects from the full analysis set (FAS), while emphasizing that such
exclusions should be justified \citep{ich_e9}.
Using the observation indicators \(\mathcal{R}_t\) defined above,
monotone missingness implies $\mathcal{R}_1 \geq \cdots \geq \mathcal{R}_T.$ 
The observation pattern can therefore be represented by
\[
R
=
\sum_{t=1}^T \mathcal{R}_t
\in
\{1,\ldots,T\}.
\]
Here, \(R\) represents the number of observed time points and, equivalently,
the last observed time point. Thus, \(Y_t\) is observed if and only if
\(R\geq t\).

\begin{theorem}
\label{thm:main}
Under the setting described above, suppose that the global null hypothesis
\(H_0\) and the proportional observation condition hold, that
\[
\mathrm{E}(\|\mathbf{Y}\|\mid G=g)<\infty,
\qquad g=1,\ldots,K,
\]
and that
\[
\Pr(R=T\mid G=g)>0,\qquad\Pr(G=g)>0,\qquad
g=1,\ldots,K.
\]
Let $\hat{\Sigma}$ denote the covariance estimator used in the MMRM
analysis, and suppose that
\[
\hat{\Sigma}\xrightarrow{p}\Sigma^*
\]
for some positive-definite matrix $\Sigma^*$. Then, the corresponding
estimators of the group-specific mean vectors obtained from the MMRM
observed-data likelihood satisfy
\[
\hat{\mu}^{(g)}(\hat{\Sigma}) - \hat{\mu}^{(g')}(\hat{\Sigma})
\xrightarrow{p}
\boldsymbol{0}
\]
for any $g, g' = 1,\ldots,K$ as the sample size in each group tends to infinity.
\end{theorem}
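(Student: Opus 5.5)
Throughout, fix the limit $\Sigma^*$ and write $\Sigma$ for a generic positive-definite matrix near it. Because the MMRM mean model has a separate mean vector for each group and shares only $\Sigma$, the observed-data normal likelihood with $\Sigma$ fixed yields a closed-form generalized least squares estimator for each group separately. For the monotone pattern $R=r$, let $P_r$ be the $r\times T$ selection matrix picking the first $r$ coordinates, and set $M_r(\Sigma)=P_r^\top(P_r\Sigma P_r^\top)^{-1}P_r$ and $v_r(y;\Sigma)=M_r(\Sigma)\,y$, which depends only on $y_{1:r}$. The estimating equation in group $g$ is $\sum_{i:G_i=g}\{v_{R_i}(Y_i;\Sigma)-M_{R_i}(\Sigma)\mu^{(g)}\}=0$. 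Hence
\[
\hat\mu^{(g)}(\Sigma)=\hat A_g(\Sigma)^{-1}\hat b_g(\Sigma),
\]
where $\hat A_g$ and $\hat b_g$ are the group-$g$ sample means of $M_{R}(\Sigma)$ and $v_{R}(Y;\Sigma)$. The plan has three parts: (i) identify the probability limits $A_g(\Sigma)$ and $b_g(\Sigma)$; (ii) show $A_g(\Sigma)^{-1}b_g(\Sigma)$ does not depend on $g$ under $H_0$ and the proportional observation condition; (iii) pass from fixed $\Sigma$ to $\hat\Sigma$.

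For (i)--(ii), I would use Abel summation over patterns instead of working with pattern probabilities. Pattern probabilities $\Pr(R=r\mid y,g)=\Pr(\mathcal R_r=1\mid y,g)-\Pr(\mathcal R_{r+1}=1\mid y,g)$ are \emph{not} proportional across groups, whereas the marginal indicators $\mathcal R_t$ are. Set $M_0=0$ and $v_0=0$. Then
\[
M_R=\sum_{t=1}^T \mathcal R_t\,(M_t-M_{t-1}),\qquad
v_R(Y)=\sum_{t=1}^T \mathcal R_t\,\{v_t(Y)-v_{t-1}(Y)\}.
\]
Write $p_t(y)=\Pr(\mathcal R_t=1\mid \mathbf Y=y,G=1)$ and $c_{1,t}=1$. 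By the law of large numbers, the finite first moment, and the common density $f$ under $H_0$, we get
\[
A_g=\sum_t c_{g,t}\,\mathrm E\{p_t(\mathbf Y)\}(M_t-M_{t-1}),\qquad
b_g=\sum_t c_{g,t}\,\mathrm E\{p_t(\mathbf Y)(v_t-v_{t-1})(\mathbf Y)\}.
\]

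The key structural fact is that $M_t-M_{t-1}=w_tw_t^\top/\sigma_t^2$ has rank one. Here $\sigma_t^2$ is the conditional variance of $Y_t$ given $Y_{1:t-1}$ under $\Sigma$, and $w_t$ is the vector with entry $1$ in position $t$, minus the regression coefficients of $Y_t$ on $Y_{1:t-1}$ in positions $1,\dots,t-1$, and zeros after $t$. Correspondingly, $v_t(y)-v_{t-1}(y)=w_t(w_t^\top y)/\sigma_t^2$; both identities follow from the block (Schur complement) inverse. Thus each $t$-term of $A_g\mu-b_g$ is $c_{g,t}\,w_t\,\sigma_t^{-2}\,\{s_t\,w_t^\top\mu-\mathrm E[p_t(\mathbf Y)w_t^\top\mathbf Y]\}$, with $s_t=\mathrm E p_t(\mathbf Y)$. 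Define $\mu^*(\Sigma)$ by the $T$ scalar equations
\[
w_t^\top\mu^*=\mathrm E[p_t(\mathbf Y)\,w_t^\top \mathbf Y]/s_t,\qquad t=1,\dots,T.
\]
The $w_t$ form a unit lower-triangular system, so $\mu^*$ exists and is unique. It makes every term vanish, so $A_g\mu^*=b_g$ for every $g$, and $\mu^*$ is free of the constants $c_{g,t}$. Positive definiteness of $A_g$ follows because $c_{g,t}s_t=\Pr(\mathcal R_t=1\mid G=g)\ge\Pr(R=T\mid G=g)>0$ and the $w_t$ span $\mathbb R^T$. Hence $A_g(\Sigma)^{-1}b_g(\Sigma)=\mu^*(\Sigma)$ for all $g$.

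For (iii), $\hat A_g(\Sigma)$ and $\hat b_g(\Sigma)$ are finite sums of pattern-frequency and pattern-mean statistics multiplied by matrices continuous in $\Sigma$. Specifically, $\hat A_g(\Sigma)=\sum_r \hat\pi_{g,r}M_r(\Sigma)$ and $\hat b_g(\Sigma)=\sum_r M_r(\Sigma)\,\hat m_{g,r}$, where $\hat m_{g,r}$ is the group-$g$ sample mean of $\mathbf 1\{R=r\}Y$. These $\Sigma$-free statistics converge by the law of large numbers, using $\Pr(G=g)>0$ so that group sizes diverge. Together with $\hat\Sigma\xrightarrow{p}\Sigma^*$ and the continuous mapping theorem (matrix inversion being continuous at the positive-definite $A_g(\Sigma^*)$), this gives $\hat\mu^{(g)}(\hat\Sigma)\xrightarrow{p}\mu^*(\Sigma^*)$ for each $g$, and the differences tend to $\boldsymbol 0$.

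I expect the main obstacle to be the non-proportionality of pattern probabilities. The rewriting through the $\mathcal R_t$ together with the rank-one increment structure is the step that must be checked carefully, in particular that $v_t-v_{t-1}$ is exactly the rank-one form above.
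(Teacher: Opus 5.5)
Your proposal is correct and follows essentially the same route as the paper's proof. Both arguments use Abel summation from the pattern events $R=t$ to the at-risk events $R\ge t$, the rank-one Schur-complement increments $C_t-C_{t-1}=s_tu_tu_t^\top$ (your $w_tw_t^\top/\sigma_t^2$), the fact that $\mathrm{E}[\mathbf{Y}\mid R\ge t,G=g]$ is common to all groups under $H_0$ and the proportional observation condition, and continuity in $\Sigma$ to pass to $\hat\Sigma$. The differences are cosmetic: you verify a common root of the triangular system $w_t^\top\mu^*=w_t^\top\nu_t$ rather than inverting $C^{(g)}=UD^{(g)}U^\top$ to obtain the paper's explicit limit $\sum_t\Sigma^*(C_t-C_{t-1})\nu_t^{(g)}$ (which holds without $H_0$), and your finite-sum continuous-mapping step is a tidier version of the paper's uniform-convergence sketch.
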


Notably, the result does not require the true outcome distribution to be normal, nor does it impose a parametric structure on the visit-specific means. In addition, \(\Sigma^*\) need not equal the true within-subject covariance matrix; the covariance estimator is required only to converge to a positive-definite limit. Thus, the result does not rely on correct specification of either the working normal distribution or the within-subject covariance structure. Theorem~\ref{thm:main} therefore addresses the centering component required for MMRM-based testing under the null hypothesis.
Valid inference additionally requires an appropriate variance estimator, as
discussed in Section~\ref{sec:discussion}.

The proof of Theorem~\ref{thm:main} is provided in Appendix~\ref{app:proof}.

\subsection{Intuition and examples}

\begin{figure}[!b]
    \centering
    \includegraphics[width=0.90\linewidth]{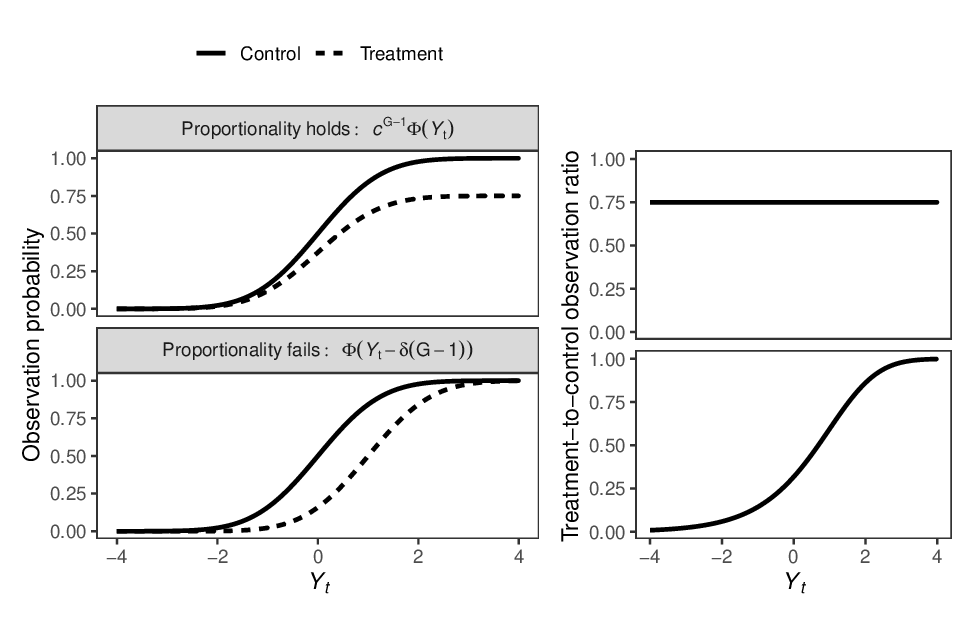}
    \caption{Illustration of proportional and non-proportional observation
    mechanisms. Left panels show observation probabilities, and right panels
    show the treatment-to-control observation ratio. Under proportionality
    (top), the ratio is constant with respect to the outcome; under a shifted
    probit model (bottom), the ratio varies with the outcome.}
     \label{fig:proportionality}
\end{figure}

To provide an intuitive interpretation of the proportional observation
condition, we specialize to the two-group case (\(K=2\)), with \(G=1\)
denoting the control group and \(G=2\) the treatment group. Consider a fixed
time point \(t\). Under monotone missingness, \(\mathcal{R}_t=1\) is
equivalent to \(R\geq t\), that is, remaining under observation through time
point \(t\). Suppose that larger values of \(Y_t\) indicate better disease
status and, as a simple illustration, that the probability of remaining under
observation through time point \(t\) because of an outcome-dependent process is
\[
\Pr(R\geq t\mid\mathbf{Y}=y,G)
=
\Phi(y_t),
\]
where \(\Phi(\cdot)\) denotes the cumulative distribution function of the
standard normal distribution. Subjects with poorer outcomes therefore have
a lower probability of remaining under observation. Because this probability
depends on the potentially unobserved value of \(Y_t\), the corresponding
missingness mechanism may be MNAR.

\medskip\noindent\textbf{Example 1: Proportional observation with additional outcome-independent dropout.}\par
\nopagebreak[4]

Suppose first that this outcome-dependent dropout process operates in the
same way in both groups. Under the global null hypothesis, treatment has no
effect on the longitudinal outcome distribution, so a common
outcome-dependent component of dropout may be a natural starting point.
This does not require the overall dropout rate to be the same between groups.
The treatment group may additionally have dropout through pathways unrelated
to the outcome of interest, such as adverse events or treatment burden.

Let \(c_t\), \(0<c_t\leq1\), denote the probability that a subject in the
treatment group remains under observation through time point \(t\) with
respect to this additional outcome-independent dropout process. Thus,
\(1-c_t\) is the cumulative probability of dropout through this additional
process by time point \(t\). In general, \(c_t\) may depend on \(t\), and
would typically decrease as follow-up proceeds.

If the outcome-dependent and additional outcome-independent dropout processes
act independently, the overall probability of remaining under observation
through time point \(t\) can be written as
\[
\Pr(R\geq t\mid\mathbf{Y}=y,G)
=
c_t^{G-1}\Phi(y_t),
\qquad G=1,2.
\]
Consequently,
\[
\frac{
\Pr(R\geq t\mid\mathbf{Y}=y,G=2)
}{
\Pr(R\geq t\mid\mathbf{Y}=y,G=1)
}
=
c_t,
\]
which does not depend on \(y_t\). Thus, treatment may change the overall
probability of remaining under observation, while the way in which the
outcome itself affects observation remains common between groups. This is
precisely the structure permitted by the proportional observation condition.
The upper panels of Figure~\ref{fig:proportionality} illustrate this setting:
the observation probabilities differ between groups, but their ratio is
constant with respect to the outcome. For simplicity, \(c_t\) is denoted by
\(c\) in the figure.

\medskip\noindent\textbf{Example 2: Non-proportional observation.}\par
\nopagebreak[4]

Now consider instead
\[
\Pr(R\geq t\mid\mathbf{Y}=y,G)
=
\Phi\{y_t-\delta(G-1)\},
\qquad
G=1,2,\quad \delta>0.
\]
Then,
\[
\frac{
\Pr(R\geq t\mid\mathbf{Y}=y,G=2)
}{
\Pr(R\geq t\mid\mathbf{Y}=y,G=1)
}
=
\frac{\Phi(y_t-\delta)}{\Phi(y_t)},
\]
which depends on \(y_t\). The proportional observation condition is therefore
violated. In particular, the relative reduction in the probability of
remaining under observation is greater for subjects with poorer outcomes and
becomes smaller as the outcome improves.

Unlike Example 1, this mechanism assumes that, although treatment does not
affect the longitudinal outcome distribution under the global null hypothesis,
the effect of the outcome on the probability of remaining observed differs
between groups. Although mathematically well defined, this structure may be
less natural than that in Example 1 in many clinical settings.

The lower panels of Figure~\ref{fig:proportionality} illustrate this
non-proportional mechanism. Although the treatment group has a lower
observation probability throughout, the treatment-to-control observation
ratio varies with the outcome. Taken together, the two examples highlight
the essential meaning of the proportional observation condition in the
two-group case: it permits treatment-group differences in the overall
probability of observation, but requires the outcome dependence of the
observation process to be common between groups up to a time-specific
proportional factor.

\section{Simulation studies}
\subsection{Fully synthetic simulation study}
\medskip\noindent\textbf{Simulation design.}\par
\nopagebreak[4]
Simulation studies were conducted to evaluate the finite-sample performance
of the MMRM method under the global null hypothesis and to examine whether
the theoretical finding might extend empirically to nonmonotone missingness.

Longitudinal outcomes at baseline and five post-baseline time points were
generated from either a multivariate normal distribution or a multivariate
lognormal distribution. The mean and standard deviation vectors on the
original scale were specified as
$(100,90,80,70,75,80)^\top$ and $(25,20,15,10,15,20)^\top,$
respectively, where the first element corresponds to baseline. An
autoregressive correlation structure of order 1 with correlation parameter
\(0.8\) was used. In the lognormal scenarios, the correlation structure was
specified on the log-transformed scale. The outcome distributions were
identical between the treatment groups in all scenarios, corresponding to
the global null hypothesis.

Efficacy-related missingness was generated using an outcome-dependent probit
model. Let \(z_t\) denote the standardized outcome at time point \(t\); in
the lognormal scenarios, standardization was performed after log
transformation. The probability of efficacy-related missingness at
post-baseline time point \(t\) was specified as
\[
\Pr(M_t=1\mid\mathbf{Y})
=
\Phi(\gamma_0+0.5z_{t-1}+z_t),
\]
where \(M_t=1\) indicates that the outcome at time point \(t\) is missing
for an efficacy-related reason, and \(\Phi(\cdot)\) denotes the cumulative
distribution function of the standard normal distribution. The intercept \(\gamma_0\) was calibrated so that the final-visit
efficacy-related missingness proportion, after applying the dropout process,
was \(20\%\) or \(40\%\). We also
considered a setting with no efficacy-related missingness.

Among subjects with at least one efficacy-related missing outcome, dropout
was imposed with probability \(p_d\). For subjects selected to drop out, all
outcomes from the first efficacy-related missing time point onward were set
to missing. We considered \(p_d=0.4\) and \(p_d=1\). When \(p_d=0.4\),
efficacy-related missingness consisted of a mixture of intermittent
missingness and dropout and could therefore be nonmonotone. When \(p_d=1\),
all efficacy-related missingness resulted in dropout, producing monotone
missingness. When the efficacy-related missingness proportion was \(0\%\),
the value of \(p_d\) was irrelevant, and this setting was counted only once.

In addition, treatment-group-specific dropout due to adverse events was
generated independently of the outcome. The probability of
adverse-event-related dropout was set to \(p_a=0\) or \(p_a=0.2\) in the
treatment group and to zero in the control group. For subjects experiencing
adverse-event-related dropout, the dropout time point was selected randomly from the five post-baseline visits, and the outcome at that time point and all subsequent outcomes were set to missing. The final missing-data pattern was defined by combining
the efficacy-related and adverse-event-related missingness processes.

The total sample size was set to \(50\), \(100\), or \(200\). Control-to-treatment allocation ratios of \(1{:}1\) and approximately \(1{:}2\) were
considered. Under the \(1{:}1\) allocation, the control and treatment group
sample sizes were \((25,25)\), \((50,50)\), and \((100,100)\), respectively.
Under the \(1{:}2\) allocation, the corresponding group sample sizes were
\((17,33)\), \((33,67)\), and \((67,133)\).

The simulation factors therefore consisted of two outcome distributions,
three total sample sizes, two allocation ratios, five efficacy-related
missingness settings, and two adverse-event-related dropout probabilities.
The five efficacy-related missingness settings were
$0\%$,
$20\%\text{ with }p_d=0.4$,
$20\%\text{ with }p_d=1$, 
$40\%\text{ with }p_d=0.4$, 
and $40\%\text{ with }p_d=1$.
Thus, a total of
$
2\times3\times2\times5\times2=120
$
simulation conditions were examined.

Each simulation condition was replicated 100,000 times. For a true rejection
probability of 0.05, the Monte Carlo standard error is approximately 0.0007,
corresponding to an approximate pointwise 95\% Monte Carlo interval of
0.0486--0.0514 for the empirical test size in any individual condition.
Because 120 simulation conditions were examined, this pointwise interval was
not intended as a simultaneous criterion for assessing type I error control
across all conditions.
To reflect a commonly used implementation of MMRM for longitudinal continuous outcomes in randomized clinical trials, we used restricted maximum likelihood estimation, an unstructured within-subject covariance matrix, and Kenward--Roger inference \citep{kenward_roger_97}. The model included treatment group, visit, and their interaction as fixed effects, together with baseline outcome and its interaction with visit. Visit was treated as a categorical variable. The adjusted treatment difference at the final time point was the parameter of interest. Replicates in which the MMRM did not converge were excluded from the calculation of standardized bias and empirical test size, and the frequency of nonconvergence was recorded.

Finite-sample performance was evaluated in terms of standardized bias and
empirical test size. Because the true treatment difference was zero in all
scenarios, the standardized bias was calculated as
$\operatorname{mean}(\hat{\Delta})/\sigma_{e,T},$
where \(\hat{\Delta}\) denotes the estimated adjusted treatment difference
at the final time point and \(\sigma_{e,T}\) denotes the true residual
standard deviation at that time point after linear adjustment for baseline. The empirical test size was defined as the proportion
of simulated datasets in which the null hypothesis of no treatment effect
at the final time point was rejected using a two-sided test at a significance
level of \(0.05\).

All simulations were performed using R version 4.6.1
(R Core Team, Vienna, Austria) and the \texttt{mmrm} package,
version 0.3.18 \citep{mmrm}.

\medskip\noindent\textbf{Simulation results.}\par
\nopagebreak[4]
Across the 120 simulation conditions, the standardized bias ranged from
\(-0.0020\) to \(0.0019\), indicating negligible bias in all settings. The
empirical test size ranged from \(0.0470\) to \(0.0517\) and remained close to
the nominal level of \(0.05\). Nineteen of the 120 empirical test sizes fell
outside the pointwise Monte Carlo interval of 0.0486--0.0514; 18 were below
the interval and only one was above it. Values below the interval occurred
primarily in the lognormal scenarios with smaller sample sizes, suggesting
mild finite-sample conservatism rather than systematic type I error inflation.
These findings were otherwise consistent across outcome distributions, sample
sizes, allocation ratios, efficacy-related missingness proportions, missingness
patterns, and adverse-event-related dropout probabilities. Nonconvergence was
rare: it occurred in 6 of the 120 conditions, and the largest nonconvergence
rate was \(0.012\%\) (12 of 100,000 replicates). Detailed condition-specific
results, including nonconvergence rates, are provided in Supporting Information
Tables~S1 and~S2.

\subsection{Bootstrap simulation based on clinical trial data}
To complement the fully synthetic simulation study, we conducted a
bootstrap simulation based on longitudinal data from an actual randomized
clinical trial. 

\medskip\noindent\textbf{Data source.}\par
\nopagebreak[4]
We used the \texttt{qolef} dataset distributed with the R package
\texttt{isni}, version 1.3 \citep{Xie2018}.
The data originated from a quality-of-life companion study of Southwest
Oncology Group trial INT-0105, a randomized double-blind phase III trial
in patients with metastatic prostate cancer. The parent trial compared
bilateral orchiectomy plus flutamide with bilateral orchiectomy plus placebo
to evaluate whether the addition of flutamide provided a clinically
meaningful survival benefit \citep{Eisenberger1998}. Because treatment for
metastatic prostate cancer was primarily palliative, the companion study
was conducted to assess the achievement of palliation and to characterize
the beneficial and adverse effects of the two androgen-deprivation
strategies on quality of life \citep{Moinpour1998}.

Quality-of-life outcomes were assessed at randomization and at 1, 3, and
6 months thereafter. The prespecified quality-of-life domains included
three treatment-specific measures---diarrhea, gas pain, and body
image---as well as physical functioning and emotional functioning.
The present bootstrap simulation used the emotional functioning (EF) score,
for which higher values indicate better emotional functioning. The
\texttt{qolef} dataset contained 715 subjects with an observed baseline EF
score. For each subject, the dataset included the baseline EF score, EF scores
at the three post-baseline visits, the original treatment assignment, and
indicators distinguishing observed measurements, intermittent missingness,
and dropout.
Among the 715 subjects, 497 (69.5\%) had complete EF measurements at all
three post-baseline visits. Intermittent missingness occurred in
68 subjects (9.5\%), and dropout occurred in 158 subjects (22.1\%),
including eight subjects who experienced both intermittent missingness
and subsequent dropout. Thus, the dataset contained both monotone and
nonmonotone missing-data patterns.

\medskip\noindent\textbf{Simulation design.}\par
\nopagebreak[4]
The original treatment-group labels were discarded, and all subjects were
pooled to form a common empirical population. For each simulation replicate,
\(N\) subjects were sampled with replacement from the pooled dataset. The
baseline EF score, post-baseline EF measurements, and missing-data pattern of
each sampled subject were retained together. Repeated selections of the same
source subject were treated as distinct subjects and assigned unique subject
identifiers.

The resampled subjects were then randomly assigned to two treatment groups in
a \(1{:}1\) ratio, independently of their outcomes and original missing-data
patterns. Thus, the newly assigned treatment had no effect on the longitudinal
outcomes or on the empirically observed missing-data process. To additionally
examine a treatment-group difference in observation probabilities of the type
permitted by the proportional observation condition, we superimposed
additional outcome-independent dropout in the treatment group only. This
additional dropout, intended to represent a pathway such as adverse-event-related 
discontinuation, occurred with probability \(p_a=0\) or \(0.2\);
the corresponding probability was zero in the control group. For subjects
selected for additional dropout, the dropout visit was chosen at random from
the 1-, 3-, and 6-month visits, and the outcome at that visit and all
subsequent outcomes were set to missing. The original empirically observed
missing-data patterns were otherwise retained. Total sample sizes of
\(N=50\), \(100\), and \(200\) were considered.

Each simulation condition was replicated 100,000 times.
The same MMRM analysis as in Section~3.1 was applied to the EF scores measured
at 1, 3, and 6 months, with baseline EF used as the baseline covariate. The
adjusted treatment-group difference at 6 months was the parameter of interest.
For this data-based simulation, standardized bias was scaled by the standard
deviation of the observed 6-month EF scores in the pooled source dataset.

This bootstrap simulation preserved the empirically observed outcome
distribution, within-subject association, and mixture of dropout and
intermittent missingness, while allowing an additional treatment-specific,
outcome-independent dropout process. However, because the unobserved outcomes
were not available, it should be interpreted as a data-based evaluation under
realistic observed missing-data patterns rather than as a simulation under a
fully specified MNAR mechanism.

\medskip\noindent\textbf{Simulation results.}\par
\nopagebreak[4]
When no additional treatment-specific dropout was imposed
(\(p_a=0\)), the standardized biases for total sample sizes of \(50\),
\(100\), and \(200\) were \(-0.0001\), \(-0.0010\), and \(0.0002\),
respectively, and the corresponding empirical test sizes were \(0.0482\),
\(0.0490\), and \(0.0497\). With additional outcome-independent dropout in
the treatment group (\(p_a=0.2\)), the standardized biases were \(0.0005\),
\(-0.0008\), and \(0.0001\), and the corresponding empirical test sizes were
\(0.0479\), \(0.0489\), and \(0.0496\), respectively. No model-fitting
failures occurred in any of the bootstrap replicates.

Thus, the estimated treatment-group differences remained essentially centered
at zero and no type I error inflation was observed, even when the treatment
group had additional outcome-independent dropout. These findings were
consistent with the fully synthetic simulation study and supported the
finite-sample performance of MMRM-based testing under observation mechanisms
compatible with the proportional observation condition.

\section{Discussion}
\label{sec:discussion}
In this study, we examined the use of the MMRM method under MNAR missingness
from the perspective of hypothesis testing. We showed that, under the global
null hypothesis, monotone missingness, and the proportional observation
condition, MMRM estimators of between-group mean differences converge to
zero for an arbitrary number of post-baseline visits, even when the
probability of observation depends on potentially unobserved outcomes. The
simulation results supported this theoretical finding in broader finite-sample
settings, including baseline adjustment, nonmonotone missingness, nonnormal
outcomes, and additional
treatment-group-specific dropout unrelated to the outcome. A data-based
bootstrap simulation using longitudinal quality-of-life data from an actual
randomized clinical trial further showed negligible bias and appropriate test
size while preserving the empirically observed outcome distribution,
within-subject association, and mixture of monotone and nonmonotone
missingness.

The present theoretical result complements previous work on robust variance
estimation for the MMRM method. Ishii et al.\cite{ishii_etal_24} showed that a robust
variance estimator for MMRM treatment effect estimators can remain consistent
under model misspecification and missing data, including MNAR missingness.
Consistency of the variance estimator alone, however, is not sufficient for
valid hypothesis testing; the treatment effect estimator itself must also
converge to the null value under the null hypothesis. The present study
addresses this remaining component by deriving a condition under which the
MMRM treatment effect estimator is consistent under the global null
hypothesis. Taken together, these results provide a basis for understanding
why MMRM-based tests may exhibit appropriate type I error control under some
MNAR mechanisms.

The proportional observation condition is not as restrictive as it may first
appear. It allows the overall probability of observation to differ between
treatment groups, while requiring that the outcome-dependent component of the
observation process is common between groups up to a proportional factor.
For example, in a two-group trial, patients with worse outcomes may be more
likely to discontinue follow-up in both groups, while the treatment group may
additionally have outcome-independent dropout due to adverse events or
treatment burden. In
such a setting, observation probabilities can differ between groups, but the
way in which the outcome affects observation remains the same. This type of
mechanism is clinically plausible under the global null hypothesis, where
treatment has no effect on the outcome distribution. By contrast, violation
of the proportional observation condition means that the relationship between
the outcome and observation differs between treatment groups. Although such a
mechanism is mathematically well defined, it is less straightforward to
motivate clinically under the global null hypothesis because it requires the
outcome dependence of the observation process itself to differ between groups.
Thus, the proportional observation condition includes a practically plausible
class of MNAR mechanisms while still allowing treatment-group differences in
overall observation probabilities.

The focus on the global null hypothesis also warrants consideration. The
global null hypothesis represents a situation in which treatment has no
effect on the full longitudinal outcome distribution and therefore corresponds
to a scientifically meaningful no-treatment-effect configuration, rather than
merely a technical assumption introduced for mathematical convenience. In
contrast, a pointwise null hypothesis at the target visit may hold even when
treatment affects outcomes at other visits. Such a configuration is a null
hypothesis for the target contrast but is not a no-treatment-effect
configuration for the longitudinal outcome process as a whole. This distinction
does not remove the need to control type I error under the prespecified
pointwise null hypothesis, and the present result does not cover all such null
configurations. Nevertheless, the global null hypothesis provides a natural
starting point for investigating MNAR mechanisms because it allows us to ask
whether outcome-dependent observation alone can generate spurious evidence of
a treatment effect when treatment itself has no effect on the outcome process.

These results suggest that MNAR missingness does not automatically invalidate
MMRM-based testing. In practice, outcome-dependent dropout is often regarded
as a reason to avoid the MMRM method or to replace it with more conservative
analyses. Such approaches may be appropriate when the primary objective is
estimation under a particular estimand, especially when the interpretation of
the MMRM estimate is questionable. However, overly conservative primary
analyses can have non-negligible consequences for clinical development. Drug
development remains a resource-intensive process, characterized by high
attrition, long development timelines, and substantial research and
development costs
\citep{wong_etal_19,wouters_etal_20,sertkaya_etal_24}.
Recent analyses also suggest that R\&D intensity and overall R\&D spending
remain high, underscoring the continuing pressure to improve the efficiency
of clinical development \citep{sertkaya_etal_24}. Therefore, even a modest
loss of power at the confirmatory-trial stage can affect the required sample
size, the probability of trial success, and ultimately the efficiency of
development programs. From this perspective, clarifying when MMRM-based
testing retains type I error control under MNAR mechanisms is not merely a
technical issue; it may provide one way to avoid unnecessary conservatism
while preserving the validity of statistical hypothesis testing in
confirmatory clinical trials.

Ideally, hypothesis testing and estimation should be aligned: the test should
support the same treatment effect that is summarized by the reported estimate,
and both should be interpretable under the prespecified estimand. This
alignment is particularly important under the ICH E9(R1) estimand framework.
Nevertheless, statistical hypothesis testing in a confirmatory clinical trial
has a more limited role. Its primary purpose is to assess whether the data
provide sufficient evidence to reject the null hypothesis of no treatment
effect while controlling the type I error rate. It does not, by itself,
determine the magnitude or clinical interpretation of the treatment effect
under a particular estimand. These latter issues belong primarily to
estimation and sensitivity analyses.

This distinction is not unique to the missing-data setting. In group
sequential trials, for example, hypothesis testing is conducted using
prespecified sequential boundaries to control the type I error rate, whereas
conventional fixed-sample estimates of the treatment effect may be biased
after interim analyses \citep{whitehead_86}. Accordingly, FDA guidance on
adaptive designs recommends that methods for calculating estimates and
confidence intervals that appropriately account for the group sequential
design should be prospectively planned and used for reporting results
\citep{fda_adaptive_19}. This example illustrates that testing and estimation
can, when necessary, be handled using different statistical tools while
remaining part of a coherent analysis strategy. In a similar sense, the MMRM
method may be considered as a testing procedure under certain MNAR mechanisms,
provided that the interpretation of the estimated treatment effect is
addressed separately through estimand-based analyses and sensitivity analyses.

From this perspective, requiring the primary hypothesis test itself to
resolve all questions about the interpretation of treatment effects under
MNAR missingness may be unnecessarily restrictive. When the objective is to
make a confirmatory claim that the treatment effect is not zero while
controlling the type I error rate, the key question for the test is whether
type I error is preserved under the relevant null hypothesis. The present
study addresses this question by identifying conditions under which
MMRM-based testing remains valid under MNAR mechanisms. This does not remove
the need for estimand-based interpretation of treatment effect estimates,
but it separates the validity of the confirmatory claim from the
interpretation of the reported estimate. This distinction may be particularly
relevant in trials where patient recruitment is difficult, sample size
expansion is limited, or repeated follow-up imposes a substantial burden on
participants. In such settings, avoiding unnecessary conservatism in the
primary test, while retaining appropriate sensitivity analyses for
estimation, may be an important practical consideration.

Our findings should not be interpreted as a general justification for the
MMRM method under MNAR missingness. The theoretical result concerns the
global null hypothesis and therefore addresses the validity of hypothesis
testing, not unbiased estimation under alternatives. When a true treatment
effect exists, outcome-dependent missingness may induce bias in the MMRM
estimate, and the resulting estimate may not correspond to the treatment
effect targeted by the estimand. Thus, for estimation, the assumptions
underlying the analysis must still be carefully assessed, and sensitivity
analyses based on alternative missing-data assumptions remain essential
\citep{ich_e9r1,little_rubin,molenberghs_kenward}.

Several limitations should be noted. First, the theoretical result assumes
monotone missingness. Although both simulation studies included nonmonotone
missing-data patterns, a formal extension to general missing-data patterns
remains future work. Second, the proportional observation condition is a
sufficient condition, not a necessary condition. Other observation mechanisms
may also yield valid MMRM-based testing under the global null hypothesis, but
they were not characterized in this study. Third, the present study focused
on the global null hypothesis. Local null hypotheses, in which treatment
effects may be present at some visits but absent at the target visit, were
not examined. Relatedly, the simulations used restricted maximum likelihood
estimation with the Kenward--Roger method under a covariance structure common
to the treatment groups, and their results should not be extrapolated to
settings with group-specific covariance structures. In such settings,
nonorthogonality between mean and covariance parameters may affect standard
error estimation, particularly in small samples, and procedures that do not
rely on orthogonality may be preferable
\citep{maruo_etal_20,maruo_etal_26}. Finally, this study does not determine
how large a departure from the proportional observation condition can be
tolerated while maintaining acceptable type I error control in practice.

In conclusion, MNAR missingness does not necessarily preclude the use of the
MMRM method as a hypothesis-testing procedure. Under the global null
hypothesis, if the proportional observation condition holds, MMRM estimators
of between-group mean differences remain consistent for zero despite
outcome-dependent missingness. Together with the simulation results, this
finding suggests that MMRM-based testing may retain validity under practically
relevant MNAR mechanisms. This conclusion is deliberately limited to
hypothesis testing; treatment effect estimation under alternatives continues
to require estimand-based interpretation and appropriate sensitivity
analyses.

\section*{Acknowledgments}
The authors used ChatGPT (OpenAI) for English language refinement and
assistance with R code development and correction. All AI-assisted output was
reviewed and revised by the authors, who take full responsibility for the
content of the manuscript.

\section*{Author Contributions}
Kazushi Maruo and Ryota Ishii contributed equally to the conceptualization of the study, mathematical investigation, simulation studies, and drafting of the manuscript. Yusuke Yamaguchi and Naoki Isogawa supported the conceptualization of the study and critically reviewed and revised the manuscript. Keisuke Hanada supported the mathematical investigation and critically reviewed and revised the manuscript. Masahiko Gosho supervised the study and critically reviewed and revised the manuscript.

\section*{Funding}
This work was supported by JSPS KAKENHI Grant Numbers 23K11003 and 26K02873.

\section*{Conflicts of Interest}
The authors declare no conflicts of interest.

\section*{Data Availability Statement}
The data used in this paper to illustrate our findings in the data-based
bootstrap simulation are available in the \texttt{qolef} dataset distributed
with the R package \texttt{isni} and are described by \citet{Xie2018}. No
external data were used for the fully synthetic simulation study.

\bibliographystyle{plainnat}
\bibliography{bibfile}

\clearpage
\appendix
\section{Proof of Theorem 1}
\label{app:proof}

For $t = 1,\ldots,T$ and $g = 1,\ldots,K$, let $p_t^{(g)} = \Pr(R = t, G = g)$.
For any positive-definite matrix $\Omega = [\omega_{j,k}]_{1 \leq j,k \leq T}$, we first consider the MMRM method with a fixed variance-covariance matrix $\Omega$.
Let $\Omega_t$ denote the $t \times t$ submatrix of $\Omega$ corresponding to the first $t$ time points.
Let $O_{k,k'}$ be the $k \times k'$ zero matrix and $e_{n,k}$ be the $k$th standard basis vector in $\mathbb{R}^n$.
Define the $T \times T$ matrices $C_1(\Omega), \ldots C_T(\Omega)$ and $C^{(g)}(\Omega)$ for $g = 1,\ldots, K$, as follows:
\begin{align*}
C_t(\Omega) = \begin{bmatrix}
\Omega_t^{-1} &  \\
 & O_{T - t, T - t}
\end{bmatrix}, \quad C^{(g)}(\Omega) = \sum_{t = 1}^T p_t^{(g)}C_t(\Omega).
\end{align*}
The condition $\Pr(R = T \mid G = g) > 0$ in Theorem 1 implies that $p_T^{(g)} > 0$ and hence that $C^{(g)}$ is invertible. 
\begin{proposition} \label{prop:MLE1}
For each group $g = 1,\ldots,K$, the probability limit of $\hat{\mu}^{(g)}(\Omega)$ is given by
\[
\sum_{t = 1}^T(p_t^{(g)} + \cdots + p_T^{(g)})(C^{(g)})^{-1}(C_t - C_{t - 1})\nu_t^{(g)},
\]
where $C_0 = O_{T, T}$, $C_t = C_t(\Omega)$, $C^{(g)} = C^{(g)}(\Omega)$, and $\nu_t^{(g)} = \mathrm{E}[\mathbf{Y} \mid R \ge t, G = g]$ for $t = 1,\ldots,T$.
\end{proposition}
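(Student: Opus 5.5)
The plan is to write the MMRM estimator with the covariance matrix fixed at $\Omega$ in closed form as a weighted least-squares (GLS) estimator. We then pass to the limit with the law of large numbers, and finally rearrange the limit by summation by parts.

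\textbf{Step 1: closed form of the estimator.} Take subjects $i=1,\ldots,n$ with $(\mathbf{Y}_i,R_i,G_i)$ i.i.d. For a subject with $R_i=t$ and $G_i=g$, the working log-likelihood contribution is, up to constants,
\[
-\tfrac12(y_{i,1:t}-\mu^{(g)}_{1:t})^\top\Omega_t^{-1}(y_{i,1:t}-\mu^{(g)}_{1:t})
=-\tfrac12(\mathbf{Y}_i-\mu^{(g)})^\top C_t(\Omega)(\mathbf{Y}_i-\mu^{(g)}).
\]
This rewriting holds because $C_t$ annihilates the unobserved components. The unobserved entries can therefore be replaced by arbitrary values, say $0$, without changing anything.

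With $\Omega$ fixed and the mean vectors unrestricted across groups, the likelihood separates by group. Setting the score for $\mu^{(g)}$ to zero gives
\[
\hat\mu^{(g)}(\Omega)=\Bigl(\tfrac1n\sum_{i}1\{G_i=g\}C_{R_i}\Bigr)^{-1}\tfrac1n\sum_i 1\{G_i=g\}C_{R_i}\mathbf{Y}_i,
\]
whenever the inverse exists.

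\textbf{Step 2: limits of the two averages.} By the weak law of large numbers, the first average converges in probability to $\sum_t p_t^{(g)}C_t=C^{(g)}$. The second average converges in probability to $\sum_t C_t\,m_t^{(g)}$, where $m_t^{(g)}=\mathrm{E}[\mathbf{Y}\,1\{R=t,G=g\}]$. This limit is finite because $\mathrm{E}(\|\mathbf{Y}\|\mid G=g)<\infty$ and $C_t$ is a fixed bounded matrix.

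Since $p_T^{(g)}>0$, the matrix $C^{(g)}$ dominates $p_T^{(g)}\Omega^{-1}\succ0$ and is invertible. Hence the empirical matrix is invertible with probability tending to one. By the continuous mapping theorem,
\[
\hat\mu^{(g)}(\Omega)\xrightarrow{p}(C^{(g)})^{-1}\sum_{s=1}^T C_s m_s^{(g)}.
\]

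\textbf{Step 3: rearrangement.} With $C_0=O_{T,T}$ we have $C_s=\sum_{t\le s}(C_t-C_{t-1})$. Interchanging the order of summation gives
\[
\sum_s C_s m_s^{(g)}=\sum_{t=1}^T (C_t-C_{t-1})\sum_{s\ge t}m_s^{(g)}.
\]
Moreover,
\[
\sum_{s\ge t}m_s^{(g)}=\mathrm{E}[\mathbf{Y}\,1\{R\ge t,G=g\}]=(p_t^{(g)}+\cdots+p_T^{(g)})\,\nu_t^{(g)},
\]
which yields the stated expression.

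\textbf{Main obstacle.} The only delicate point is Step 1: one must justify that the observed-data likelihood, with $\Omega$ held fixed, is exactly this decoupled quadratic form in the zero-padded data. This requires showing that the marginal of $N(\mu^{(g)},\Omega)$ for the first $t$ coordinates has precision matrix $\Omega_t^{-1}$, and that the common $\Omega$ does not couple the groups' mean estimating equations.

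Note also that the law of large numbers is applied to the pooled i.i.d. sample normalized by the total $n$. The phrase ``sample size in each group tends to infinity'' is then automatic from $\Pr(G=g)>0$. Finally, no normality or MAR assumption is used in any of the steps.
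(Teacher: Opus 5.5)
Your proposal is correct and follows essentially the same route as the paper, which likewise writes the fixed-$\Omega$ expected score as $Z - C\mu$ with $C$ block-diagonal in the $C^{(g)}$ (so the groups decouple), takes the limit $C^{-1}Z$, and then applies the same summation by parts together with $\sum_{s\ge t}p_s^{(g)}m_s^{(g)}=(p_t^{(g)}+\cdots+p_T^{(g)})\nu_t^{(g)}$. The differences are only cosmetic: your $m_t^{(g)}$ equals the paper's conditional mean $\mathrm{E}[\mathbf{Y}\mid R=t,G=g]$ multiplied by $p_t^{(g)}$, and you state the invertibility of $C^{(g)}$ and the continuous-mapping step somewhat more explicitly than the paper does.
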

\begin{proof}
For response pattern $R = t$ and group $G = g$, the design matrix $X_t^{(g)}$ is given by $X_t^{(g)} = e_{K,g}^\top \otimes \begin{bmatrix} E_t & O_{t, T - t} \end{bmatrix}$, where $E_k$ is the $k \times k$ identity matrix.
Let $m_t^{(g)} = \mathrm{E}[\mathbf{Y} \mid R = t, G = g]$.
The expected score is $Z - C\mu$, where 
\begin{align*}
C &= \sum_{t = 1}^T \sum_{g = 1}^K p_t^{(g)}(X_t^{(g)})^\top \Omega_t^{-1}X_t^{(g)} = \begin{bmatrix}
C^{(1)} & & \\
& \ddots & \\
& & C^{(K)}
\end{bmatrix}, \\
Z &= \sum_{t = 1}^T \sum_{g = 1}^K(X_t^{(g)})^\top \Omega_t^{-1}X_t^{(g)}(e_{K,g}\otimes p_t^{(g)}m_t^{(g)}) = \sum_{t = 1}^T \begin{bmatrix} p_t^{(1)}C_tm_t^{(1)} \\
\vdots \\
p_t^{(K)}C_tm_t^{(K)} 
\end{bmatrix}.
\end{align*}
Therefore, $\hat{\mu}$ converges in probability to $C^{-1}Z$, which can be written as
\[
C^{-1}Z =  \begin{bmatrix} \sum_{t = 1}^Tp_t^{(1)}(C^{(1)})^{-1}C_tm_t^{(1)} \\
\vdots \\
\sum_{t = 1}^Tp_t^{(K)}(C^{(K)})^{-1}C_tm_t^{(K)} 
\end{bmatrix}.
\]
Hence, the probability limit of $\hat{\mu}^{(g)}$ can be expressed as
\[
\sum_{t = 1}^Tp_t^{(g)}(C^{(g)})^{-1}C_tm_t^{(g)} = \sum_{t = 1}^T(C^{(g)})^{-1}(C_t - C_{t - 1})(p_t^{(g)}m_t^{(g)} + \cdots + p_T^{(g)}m_T^{(g)}).
\]
The identity
\begin{align*}
(p_t^{(g)} + \cdots + p_T^{(g)})\nu_t^{(g)} &= p_t^{(g)}\mathrm{E}[\mathbf{Y} \mid R = t, G = g] + \cdots + p_T^{(g)}\mathrm{E}[\mathbf{Y} \mid R = T, G = g] \\
&=p_t^{(g)}m_t^{(g)} + \cdots + p_T^{(g)}m_T^{(g)}
\end{align*}
yields the desired result.
\end{proof}

The following lemma gives a convenient expression for the matrix factor $(C^{(g)})^{-1}(C_t - C_{t - 1})$ appearing in Proposition \ref{prop:MLE1}.
\begin{biomlemma} \label{lem:Cginv}
$(C^{(g)})^{-1}(C_t - C_{t - 1}) = (p_t^{(g)} + \cdots + p_T^{(g)})^{-1}\Omega(C_t - C_{t - 1})$.
\end{biomlemma}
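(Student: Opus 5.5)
The plan is to clear the inverse and verify an equivalent identity. Write $S_t^{(g)} = p_t^{(g)} + \cdots + p_T^{(g)}$, which is positive because $p_T^{(g)}>0$. Since $C^{(g)}$ is invertible, the claim is equivalent to
\[
C^{(g)}\,\Omega\,(C_t - C_{t-1}) = S_t^{(g)}\,(C_t - C_{t-1}).
\]
Expanding $C^{(g)} = \sum_{s=1}^T p_s^{(g)} C_s$, it suffices to show, for each $s$, that
\[
C_s\Omega(C_t - C_{t-1}) = C_t - C_{t-1} \quad (s \ge t), \qquad C_s\Omega(C_t - C_{t-1}) = 0 \quad (s < t).
\]
Summing these against the weights $p_s^{(g)}$ then gives exactly $S_t^{(g)}(C_t - C_{t-1})$.

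\textbf{Key step.} The main structural fact is that
\[
C_s \Omega C_u = C_{\min(s,u)}
\]
for all $s,u \in \{0,\ldots,T\}$, with the convention $C_0 = O_{T,T}$.

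First take $u \le s$. Partition $\Omega$ conformally with the first $s$ coordinates, so the first block row of $\Omega$ is $[\Omega_s \;\; \Omega_{12}]$. Then
\[
C_s\Omega = \begin{bmatrix} E_s & \Omega_s^{-1}\Omega_{12} \\ O & O\end{bmatrix}.
\]
The matrix $C_u$ is supported on its first $u \le s$ rows and columns. Right-multiplying by $C_u$ therefore only uses the first $u$ columns of $C_s\Omega$, and these are the first $u$ columns of the embedded identity. Hence $C_s\Omega C_u = C_u$.

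For $u \ge s$, use symmetry of $\Omega$ and of every $C_k$:
\[
C_s\Omega C_u = (C_u\Omega C_s)^\top = C_s^\top = C_s.
\]
The case $u = 0$ or $s = 0$ is trivial.

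\textbf{Conclusion.}
\begin{itemize}
\item If $s \ge t$, the key step gives $C_s\Omega C_t - C_s\Omega C_{t-1} = C_t - C_{t-1}$.
\item If $s < t$, so $s \le t-1$, both terms equal $C_s$ and the difference vanishes.
\end{itemize}
This yields the displayed identity, and multiplying on the left by $(S_t^{(g)})^{-1}(C^{(g)})^{-1}$ gives the lemma.

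The only real obstacle is the block computation $C_s\Omega C_u = C_u$ for $u \le s$. The rest is bookkeeping, including the boundary case $t = 1$, where $C_0 = O_{T,T}$.
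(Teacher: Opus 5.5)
Your proof is correct, and it takes a different route from the paper's.

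\textbf{The paper's route.} It builds an explicit factorization. From the block-inversion formula it writes $C_t-C_{t-1}=s_tu_tu_t^\top$, with $U=[u_1\ \cdots\ u_T]$ unit upper triangular, so that $\Omega^{-1}=UDU^\top$. Abel summation $C^{(g)}=\sum_t(p_t^{(g)}+\cdots+p_T^{(g)})(C_t-C_{t-1})$ then gives $C^{(g)}=UD^{(g)}U^\top$. The lemma follows by inverting this simultaneous congruence-diagonalization directly.

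\textbf{Your route.} You clear the inverse and rely only on the nesting identity $C_s\Omega C_u=C_{\min(s,u)}$. That identity needs nothing more than the one-line block form of $C_s\Omega$ and symmetry. Conceptually, the matrices $\Omega(C_t-C_{t-1})$ are complementary idempotents summing to $E_T$, and $\Omega C^{(g)}=\sum_t S_t^{(g)}\Omega(C_t-C_{t-1})$ is diagonalized by them.

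\textbf{What each buys.} Your argument is shorter and avoids the block-inversion formula and the matrix $U$ entirely. The paper's factorization gives an explicit formula for $(C^{(g)})^{-1}$. It also exposes the sequential-regression structure: $u_t$ carries the regression of the $t$th coordinate on the first $t-1$ under $\Omega$, and $s_t$ is the inverse residual variance.

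\textbf{Invertibility.} You invoke the invertibility of $C^{(g)}$ asserted just before the lemma. Your own identity actually supplies it. Divide by $S_t^{(g)}$, multiply on the right by $\Omega$, and sum over $t$ to get $C^{(g)}\sum_{t=1}^T (S_t^{(g)})^{-1}\Omega(C_t-C_{t-1})\Omega=C_T\Omega=E_T$. This makes your argument self-contained and also yields an explicit inverse comparable to the paper's.
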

\begin{proof}
Let $\boldsymbol{0}_n$ denote the $n$-dimensional zero vector, and define $u_t$, $t = 1,\ldots,T$, as follows:
\[
u_1 = \begin{bmatrix}
1 \\
\boldsymbol{0}_{T - 1}
\end{bmatrix} \text{ and } u_t = \begin{bmatrix}
-\Omega_{t - 1}^{-1}\begin{bmatrix}
\omega_{1,t} \\
\vdots \\
\omega_{t - 1,t}
\end{bmatrix}\\
1 \\
\boldsymbol{0}_{T - t}
\end{bmatrix} \text{ for $t = 2,\ldots, T$}.
\]
Let $s_t$ denote the $(t,t)$ element of $\Omega_t^{-1}$, and define $U$, $D$, and $D^{(g)}$ as follows:
\begin{gather*}
U = \begin{bmatrix}
u_1 & \cdots & u_T
\end{bmatrix}, \quad D = \mathrm{diag}\left(s_1,\ldots,s_T\right), \\
D^{(g)} = \mathrm{diag}\left((p_1^{(g)} + \cdots + p_T^{(g)})s_1,\ldots,(p_t^{(g)} + \cdots + p_T^{(g)})s_t,\ldots,p_T^{(g)}s_T\right).
\end{gather*}
Applying the block matrix inversion formula to $\Omega_t$ gives $C_{t} - C_{t - 1} = s_tu_tu_t^\top = Us_te_{T,t}e_{T,t}^\top U^\top$ for $t = 1,\ldots,T$.
Hence, $\Omega^{-1}$ and $C^{(g)}$ can be expressed as follows:
\begin{align*}
\Omega^{-1} &= C_T = \sum_{t = 1}^T (C_t - C_{t - 1}) = \sum_{t = 1}^T Us_te_{T,t}e_{T,t}^\top U^\top = UD U^\top \\
C^{(g)} &= \sum_{t = 1}^T (p_t^{(g)} + \cdots + p_T^{(g)})(C_t - C_{t - 1}) = \sum_{t = 1}^T U(p_t^{(g)} + \cdots + p_T^{(g)})s_te_{T,t}e_{T,t}^\top U^\top =  UD^{(g)}U^\top.
\end{align*}
Therefore,
\begin{align*}
(C^{(g)})^{-1}(C_t - C_{t - 1}) &= (U^\top)^{-1}(D^{(g)})^{-1}U^{-1}(Us_te_{T,t}e_{T,t}^\top U^\top) \\
&= (p_t^{(g)} + \cdots + p_T^{(g)})^{-1}(U^\top)^{-1}e_{T,t}e_{T,t}^\top U^\top \\
&= (p_t^{(g)} + \cdots + p_T^{(g)})^{-1}(U^\top)^{-1}D^{-1}U^{-1}(s_tUe_{T,t}e_{T,t}^\top U^\top) \\
&= (p_t^{(g)} + \cdots + p_T^{(g)})^{-1}\Omega(C_t - C_{t - 1}).
\end{align*}
\end{proof}

Combining Proposition \ref{prop:MLE1} with Lemma \ref{lem:Cginv} shows that the probability limit of $\hat{\mu}^{(g)}(\Omega)$ is given by
\[
\sum_{t = 1}^T\Omega(C_t(\Omega) - C_{t - 1}(\Omega))\nu_t^{(g)}.
\]
Because the first moments of $\mathbf{Y}$ are finite and $\Sigma^*$ is positive definite, we may choose a compact neighborhood of $\Sigma^*$ consisting only of positive-definite matrices. On this neighborhood, the empirical score matrices and vectors converge uniformly to their expectations, and the corresponding inverse matrices remain well defined with probability tending to one. Hence, $\hat{\mu}(\Omega)$ converges uniformly in probability to its probability limit on this neighborhood. Since $\hat{\Sigma}\xrightarrow{p}\Sigma^*$, the continuous mapping theorem then yields the following result.

\begin{proposition} \label{prop:MLE2}
The probability limit of $\hat{\mu}^{(g)}(\hat{\Sigma})$ is given by
\[
\sum_{t = 1}^T\Sigma^*(C_t(\Sigma^*) - C_{t - 1}(\Sigma^*))\nu_t^{(g)}.
\]
\end{proposition}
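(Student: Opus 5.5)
The plan is to make precise the uniform-convergence sketch given just before the statement. The key observation is that, for a fixed working covariance $\Omega$, the MMRM estimator is an explicit generalized least squares (GLS) solution in which $\Omega$ and the data enter separately.

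\textbf{Step 1: exact form of the estimator.} For fixed $\Omega$, the working normal likelihood is maximized over $\mu$ by solving the normal equations. Let
\[
\hat p_t^{(g)}=n^{-1}\sum_i 1\{R_i=t,G_i=g\},\qquad \hat q_t^{(g)}=n^{-1}\sum_i 1\{R_i=t,G_i=g\}\,\tilde{\mathbf Y}_i,
\]
where $\tilde{\mathbf Y}_i$ is $\mathbf Y_i$ with its unobserved components set to zero. Then
\[
\hat\mu^{(g)}(\Omega)=\hat C^{(g)}(\Omega)^{-1}\hat Z^{(g)}(\Omega),\qquad \hat C^{(g)}(\Omega)=\sum_{t=1}^T \hat p_t^{(g)}C_t(\Omega),\qquad \hat Z^{(g)}(\Omega)=\sum_{t=1}^T C_t(\Omega)\hat q_t^{(g)}.
\]
This holds because the block-diagonal structure of $C$ in the proof of Proposition~\ref{prop:MLE1} holds exactly in the sample, not only in expectation. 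Only the first $t$ coordinates of $\hat q_t^{(g)}$ matter, since $C_t(\Omega)$ vanishes outside the leading $t\times t$ block.

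By the weak law of large numbers, using $\mathrm{E}(\|\mathbf Y\|\mid G=g)<\infty$, we have $\hat p_t^{(g)}\xrightarrow{p}p_t^{(g)}$ and $\hat q_t^{(g)}\xrightarrow{p}p_t^{(g)}m_t^{(g)}$. These are finitely many random quantities that do not depend on $\Omega$.

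\textbf{Step 2: uniform convergence on a compact neighborhood.} Choose a compact neighborhood $\mathcal N$ of $\Sigma^*$ consisting of positive-definite matrices. Each map $\Omega\mapsto C_t(\Omega)$ is continuous on $\mathcal N$, since $\Omega_t^{-1}$ is continuous on positive-definite matrices, and so it is bounded there by some $M$. Hence
\[
\sup_{\Omega\in\mathcal N}\|\hat C^{(g)}(\Omega)-C^{(g)}(\Omega)\|\le M\sum_t|\hat p_t^{(g)}-p_t^{(g)}|\xrightarrow{p}0,
\]
and the analogous bound holds for $\hat Z^{(g)}$. For invertibility, note that each $C_t(\Omega)$ is positive semidefinite, so $C^{(g)}(\Omega)\succeq p_T^{(g)}\Omega^{-1}$. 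Its smallest eigenvalue is therefore bounded below by some $\lambda>0$ uniformly on $\mathcal N$, using $p_T^{(g)}>0$ and compactness. Consequently, with probability tending to one, $\hat C^{(g)}(\Omega)$ has smallest eigenvalue at least $\lambda/2$ for all $\Omega\in\mathcal N$. Matrix inversion is Lipschitz on such a set, so $\sup_{\mathcal N}\|\hat\mu^{(g)}(\Omega)-\mu^{*(g)}(\Omega)\|\xrightarrow{p}0$. Here $\mu^{*(g)}(\Omega)=C^{(g)}(\Omega)^{-1}Z^{(g)}(\Omega)$, which by Proposition~\ref{prop:MLE1} and Lemma~\ref{lem:Cginv} equals $\sum_t\Omega(C_t(\Omega)-C_{t-1}(\Omega))\nu_t^{(g)}$.

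\textbf{Step 3: plugging in $\hat\Sigma$.} Decompose
\[
\hat\mu^{(g)}(\hat\Sigma)-\mu^{*(g)}(\Sigma^*)=\bigl[\hat\mu^{(g)}(\hat\Sigma)-\mu^{*(g)}(\hat\Sigma)\bigr]+\bigl[\mu^{*(g)}(\hat\Sigma)-\mu^{*(g)}(\Sigma^*)\bigr].
\]
On the event $\{\hat\Sigma\in\mathcal N\}$, whose probability tends to one, the first bracket is bounded by the supremum from Step 2 and therefore tends to zero in probability. The second bracket tends to zero by the continuous mapping theorem, because $\mu^{*(g)}$ is continuous in $\Omega$ on positive-definite matrices. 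This continuity holds since $\nu_t^{(g)}$ is a fixed vector and $\Omega\mapsto\Omega C_t(\Omega)$ is continuous.

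\textbf{Main obstacle.} The step needing the most care is the uniform invertibility in Step 2, which is what makes $\hat\mu(\Omega)$ well defined and uniformly controlled near $\Sigma^*$. I would handle it with the lower bound $C^{(g)}(\Omega)\succeq p_T^{(g)}\Omega^{-1}$ together with Weyl's inequality for eigenvalue perturbations. The rest is routine once the separation of $\Omega$ from the data in Step 1 is noted.
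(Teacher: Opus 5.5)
Your proposal is correct and follows essentially the same route as the paper: a compact positive-definite neighborhood of $\Sigma^*$, uniform convergence of the empirical GLS matrices and vectors, invertibility with probability tending to one, and then the plug-in step via $\hat\Sigma\xrightarrow{p}\Sigma^*$ and continuity, with details the paper only sketches filled in, notably the exact sample representation $\hat\mu^{(g)}(\Omega)=\hat C^{(g)}(\Omega)^{-1}\hat Z^{(g)}(\Omega)$ and the uniform eigenvalue bound from $C^{(g)}(\Omega)\succeq p_T^{(g)}\Omega^{-1}$. One small precision: $\hat q_t^{(g)}$ converges to $p_t^{(g)}m_t^{(g)}$ with its coordinates beyond $t$ set to zero, not to $p_t^{(g)}m_t^{(g)}$ itself, but as you note this is immaterial once multiplied by $C_t(\Omega)$.
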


Propositions \ref{prop:MLE1} and \ref{prop:MLE2} do not require either the null hypothesis $H_0$ or the proportional observation condition.
For any $g = 2,\ldots,K$, under $H_0$ and the proportional observation condition, we have
\begin{align*}
f(y \mid R \geq t, G = g) &= \frac{\Pr(R \geq t \mid y, G = g)f(y \mid G = g)}{\int \Pr(R \geq t \mid y, G = g)f(y \mid G = g)dy} \\
&= \frac{c_{g,t}\Pr(R \geq t \mid y, G = 1)f(y \mid G = 1)}{c_{g,t}\int \Pr(R \geq t \mid y, G = 1)f(y \mid G = 1)dy} \\
&= f(y \mid R \geq t, G = 1)
\end{align*}
which implies $\nu_t^{(1)} = \cdots = \nu_t^{(K)}$.
Therefore, Theorem 1 follows immediately from Proposition \ref{prop:MLE2} and this identity.

\clearpage
\section{Detailed simulation results}

\renewcommand{\thetable}{S\arabic{table}}
\setcounter{table}{0}

The fully synthetic simulation study comprised 120 conditions. In the tables
below, the efficacy-related missingness proportion, \(p_d\), and the adverse-event
dropout probability are all reported on the 0--1 probability scale. The quantity
\(p_d\) denotes the probability that an efficacy-related missing outcome resulted
in dropout and is not applicable when the efficacy-related missingness proportion
was 0. ``AE dropout'' denotes the treatment-group adverse-event dropout probability;
it was 0 in the control group. The allocation ratio is
control:treatment. Performance measures were calculated among converged fits.

\scriptsize
\setlength{\tabcolsep}{3.0pt}
\renewcommand{\arraystretch}{1.05}
\begin{longtable}{cccccccc}
\caption{Detailed simulation results for the normal outcome distribution. Nonconvergence is reported as a percentage of the 100,000 simulation replicates.}\label{tab:supp-normal}\\
\toprule
\makecell{Efficacy-related\\missingness proportion} & \makecell{Dropout\\probability $p_d$} & \makecell{AE dropout\\probability} & \makecell{Allocation\\ratio} & $N$ & \makecell{Standardized\\bias} & \makecell{Test\\size} & \makecell{Nonconvergence\\(\%)} \\
\midrule
\endfirsthead
\multicolumn{8}{c}{Supplementary Table S1 (continued)}\\
\toprule
\makecell{Efficacy-related\\missingness proportion} & \makecell{Dropout\\probability $p_d$} & \makecell{AE dropout\\probability} & \makecell{Allocation\\ratio} & $N$ & \makecell{Standardized\\bias} & \makecell{Test\\size} & \makecell{Nonconvergence\\(\%)} \\
\midrule
\endhead
\midrule
\multicolumn{8}{r}{Continued on next page}\\
\endfoot
\bottomrule
\endlastfoot
0.0 & -- & 0.0 & 1:1 & 50 & 0.0005 & 0.0505 & 0.000 \\
0.0 & -- & 0.0 & 1:1 & 100 & 0.0004 & 0.0487 & 0.000 \\
0.0 & -- & 0.0 & 1:1 & 200 & 0.0003 & 0.0496 & 0.000 \\
0.0 & -- & 0.0 & 1:2 & 50 & 0.0000 & 0.0498 & 0.000 \\
0.0 & -- & 0.0 & 1:2 & 100 & -0.0004 & 0.0492 & 0.000 \\
0.0 & -- & 0.0 & 1:2 & 200 & -0.0003 & 0.0488 & 0.000 \\
0.0 & -- & 0.2 & 1:1 & 50 & 0.0019 & 0.0495 & 0.000 \\
0.0 & -- & 0.2 & 1:1 & 100 & -0.0001 & 0.0508 & 0.000 \\
0.0 & -- & 0.2 & 1:1 & 200 & 0.0002 & 0.0501 & 0.000 \\
0.0 & -- & 0.2 & 1:2 & 50 & 0.0004 & 0.0501 & 0.000 \\
0.0 & -- & 0.2 & 1:2 & 100 & -0.0004 & 0.0501 & 0.000 \\
0.0 & -- & 0.2 & 1:2 & 200 & -0.0005 & 0.0499 & 0.000 \\
0.2 & 0.4 & 0.0 & 1:1 & 50 & 0.0005 & 0.0502 & 0.000 \\
0.2 & 0.4 & 0.0 & 1:1 & 100 & 0.0006 & 0.0498 & 0.000 \\
0.2 & 0.4 & 0.0 & 1:1 & 200 & 0.0005 & 0.0503 & 0.000 \\
0.2 & 0.4 & 0.0 & 1:2 & 50 & 0.0015 & 0.0491 & 0.000 \\
0.2 & 0.4 & 0.0 & 1:2 & 100 & 0.0000 & 0.0496 & 0.000 \\
0.2 & 0.4 & 0.0 & 1:2 & 200 & -0.0001 & 0.0500 & 0.000 \\
0.2 & 0.4 & 0.2 & 1:1 & 50 & -0.0012 & 0.0496 & 0.000 \\
0.2 & 0.4 & 0.2 & 1:1 & 100 & -0.0013 & 0.0496 & 0.000 \\
0.2 & 0.4 & 0.2 & 1:1 & 200 & -0.0008 & 0.0496 & 0.000 \\
0.2 & 0.4 & 0.2 & 1:2 & 50 & 0.0002 & 0.0490 & 0.000 \\
0.2 & 0.4 & 0.2 & 1:2 & 100 & -0.0015 & 0.0501 & 0.000 \\
0.2 & 0.4 & 0.2 & 1:2 & 200 & -0.0010 & 0.0498 & 0.000 \\
0.2 & 1.0 & 0.0 & 1:1 & 50 & 0.0016 & 0.0493 & 0.000 \\
0.2 & 1.0 & 0.0 & 1:1 & 100 & 0.0004 & 0.0505 & 0.000 \\
0.2 & 1.0 & 0.0 & 1:1 & 200 & 0.0008 & 0.0508 & 0.000 \\
0.2 & 1.0 & 0.0 & 1:2 & 50 & -0.0010 & 0.0504 & 0.000 \\
0.2 & 1.0 & 0.0 & 1:2 & 100 & 0.0003 & 0.0490 & 0.000 \\
0.2 & 1.0 & 0.0 & 1:2 & 200 & 0.0007 & 0.0506 & 0.000 \\
0.2 & 1.0 & 0.2 & 1:1 & 50 & -0.0012 & 0.0506 & 0.000 \\
0.2 & 1.0 & 0.2 & 1:1 & 100 & 0.0005 & 0.0494 & 0.000 \\
0.2 & 1.0 & 0.2 & 1:1 & 200 & -0.0002 & 0.0488 & 0.000 \\
0.2 & 1.0 & 0.2 & 1:2 & 50 & -0.0006 & 0.0509 & 0.000 \\
0.2 & 1.0 & 0.2 & 1:2 & 100 & 0.0002 & 0.0494 & 0.000 \\
0.2 & 1.0 & 0.2 & 1:2 & 200 & -0.0005 & 0.0512 & 0.000 \\
0.4 & 0.4 & 0.0 & 1:1 & 50 & -0.0010 & 0.0501 & 0.000 \\
0.4 & 0.4 & 0.0 & 1:1 & 100 & -0.0005 & 0.0492 & 0.000 \\
0.4 & 0.4 & 0.0 & 1:1 & 200 & -0.0002 & 0.0484 & 0.000 \\
0.4 & 0.4 & 0.0 & 1:2 & 50 & -0.0009 & 0.0478 & 0.000 \\
0.4 & 0.4 & 0.0 & 1:2 & 100 & 0.0003 & 0.0498 & 0.000 \\
0.4 & 0.4 & 0.0 & 1:2 & 200 & 0.0003 & 0.0509 & 0.000 \\
0.4 & 0.4 & 0.2 & 1:1 & 50 & 0.0000 & 0.0493 & 0.008 \\
0.4 & 0.4 & 0.2 & 1:1 & 100 & -0.0013 & 0.0507 & 0.000 \\
0.4 & 0.4 & 0.2 & 1:1 & 200 & -0.0015 & 0.0504 & 0.000 \\
0.4 & 0.4 & 0.2 & 1:2 & 50 & -0.0005 & 0.0499 & 0.006 \\
0.4 & 0.4 & 0.2 & 1:2 & 100 & -0.0014 & 0.0500 & 0.000 \\
0.4 & 0.4 & 0.2 & 1:2 & 200 & -0.0012 & 0.0497 & 0.000 \\
0.4 & 1.0 & 0.0 & 1:1 & 50 & 0.0009 & 0.0491 & 0.000 \\
0.4 & 1.0 & 0.0 & 1:1 & 100 & -0.0002 & 0.0498 & 0.000 \\
0.4 & 1.0 & 0.0 & 1:1 & 200 & 0.0005 & 0.0510 & 0.000 \\
0.4 & 1.0 & 0.0 & 1:2 & 50 & 0.0007 & 0.0493 & 0.001 \\
0.4 & 1.0 & 0.0 & 1:2 & 100 & -0.0008 & 0.0491 & 0.000 \\
0.4 & 1.0 & 0.0 & 1:2 & 200 & 0.0000 & 0.0493 & 0.000 \\
0.4 & 1.0 & 0.2 & 1:1 & 50 & 0.0004 & 0.0502 & 0.000 \\
0.4 & 1.0 & 0.2 & 1:1 & 100 & 0.0000 & 0.0490 & 0.000 \\
0.4 & 1.0 & 0.2 & 1:1 & 200 & 0.0001 & 0.0490 & 0.000 \\
0.4 & 1.0 & 0.2 & 1:2 & 50 & -0.0015 & 0.0491 & 0.000 \\
0.4 & 1.0 & 0.2 & 1:2 & 100 & 0.0006 & 0.0493 & 0.000 \\
0.4 & 1.0 & 0.2 & 1:2 & 200 & -0.0004 & 0.0496 & 0.000 \\
\end{longtable}
\normalsize
\setlength{\tabcolsep}{6pt}
\renewcommand{\arraystretch}{1.0}

\scriptsize
\setlength{\tabcolsep}{3.0pt}
\renewcommand{\arraystretch}{1.05}
\begin{longtable}{cccccccc}
\caption{Detailed simulation results for the lognormal outcome distribution. Nonconvergence is reported as a percentage of the 100,000 simulation replicates.}\label{tab:supp-lognormal}\\
\toprule
\makecell{Efficacy-related\\missingness proportion} & \makecell{Dropout\\probability $p_d$} & \makecell{AE dropout\\probability} & \makecell{Allocation\\ratio} & $N$ & \makecell{Standardized\\bias} & \makecell{Test\\size} & \makecell{Nonconvergence\\(\%)} \\
\midrule
\endfirsthead
\multicolumn{8}{c}{Supplementary Table S2 (continued)}\\
\toprule
\makecell{Efficacy-related\\missingness proportion} & \makecell{Dropout\\probability $p_d$} & \makecell{AE dropout\\probability} & \makecell{Allocation\\ratio} & $N$ & \makecell{Standardized\\bias} & \makecell{Test\\size} & \makecell{Nonconvergence\\(\%)} \\
\midrule
\endhead
\midrule
\multicolumn{8}{r}{Continued on next page}\\
\endfoot
\bottomrule
\endlastfoot
0.0 & -- & 0.0 & 1:1 & 50 & 0.0000 & 0.0500 & 0.000 \\
0.0 & -- & 0.0 & 1:1 & 100 & -0.0003 & 0.0475 & 0.000 \\
0.0 & -- & 0.0 & 1:1 & 200 & -0.0004 & 0.0506 & 0.000 \\
0.0 & -- & 0.0 & 1:2 & 50 & 0.0005 & 0.0501 & 0.000 \\
0.0 & -- & 0.0 & 1:2 & 100 & -0.0008 & 0.0496 & 0.000 \\
0.0 & -- & 0.0 & 1:2 & 200 & -0.0007 & 0.0486 & 0.000 \\
0.0 & -- & 0.2 & 1:1 & 50 & -0.0003 & 0.0501 & 0.000 \\
0.0 & -- & 0.2 & 1:1 & 100 & -0.0002 & 0.0498 & 0.000 \\
0.0 & -- & 0.2 & 1:1 & 200 & 0.0005 & 0.0517 & 0.000 \\
0.0 & -- & 0.2 & 1:2 & 50 & 0.0005 & 0.0498 & 0.000 \\
0.0 & -- & 0.2 & 1:2 & 100 & -0.0004 & 0.0487 & 0.000 \\
0.0 & -- & 0.2 & 1:2 & 200 & -0.0009 & 0.0497 & 0.000 \\
0.2 & 0.4 & 0.0 & 1:1 & 50 & 0.0008 & 0.0485 & 0.000 \\
0.2 & 0.4 & 0.0 & 1:1 & 100 & 0.0002 & 0.0499 & 0.000 \\
0.2 & 0.4 & 0.0 & 1:1 & 200 & 0.0000 & 0.0495 & 0.000 \\
0.2 & 0.4 & 0.0 & 1:2 & 50 & -0.0010 & 0.0497 & 0.000 \\
0.2 & 0.4 & 0.0 & 1:2 & 100 & 0.0003 & 0.0503 & 0.000 \\
0.2 & 0.4 & 0.0 & 1:2 & 200 & -0.0002 & 0.0497 & 0.000 \\
0.2 & 0.4 & 0.2 & 1:1 & 50 & -0.0003 & 0.0505 & 0.000 \\
0.2 & 0.4 & 0.2 & 1:1 & 100 & -0.0015 & 0.0503 & 0.000 \\
0.2 & 0.4 & 0.2 & 1:1 & 200 & -0.0006 & 0.0494 & 0.000 \\
0.2 & 0.4 & 0.2 & 1:2 & 50 & -0.0020 & 0.0471 & 0.000 \\
0.2 & 0.4 & 0.2 & 1:2 & 100 & -0.0014 & 0.0499 & 0.000 \\
0.2 & 0.4 & 0.2 & 1:2 & 200 & -0.0003 & 0.0510 & 0.000 \\
0.2 & 1.0 & 0.0 & 1:1 & 50 & -0.0003 & 0.0484 & 0.000 \\
0.2 & 1.0 & 0.0 & 1:1 & 100 & -0.0002 & 0.0495 & 0.000 \\
0.2 & 1.0 & 0.0 & 1:1 & 200 & -0.0006 & 0.0507 & 0.000 \\
0.2 & 1.0 & 0.0 & 1:2 & 50 & 0.0007 & 0.0488 & 0.000 \\
0.2 & 1.0 & 0.0 & 1:2 & 100 & -0.0001 & 0.0497 & 0.000 \\
0.2 & 1.0 & 0.0 & 1:2 & 200 & 0.0004 & 0.0497 & 0.000 \\
0.2 & 1.0 & 0.2 & 1:1 & 50 & -0.0013 & 0.0486 & 0.000 \\
0.2 & 1.0 & 0.2 & 1:1 & 100 & -0.0004 & 0.0475 & 0.000 \\
0.2 & 1.0 & 0.2 & 1:1 & 200 & -0.0003 & 0.0503 & 0.000 \\
0.2 & 1.0 & 0.2 & 1:2 & 50 & 0.0006 & 0.0495 & 0.000 \\
0.2 & 1.0 & 0.2 & 1:2 & 100 & -0.0006 & 0.0483 & 0.000 \\
0.2 & 1.0 & 0.2 & 1:2 & 200 & -0.0008 & 0.0501 & 0.000 \\
0.4 & 0.4 & 0.0 & 1:1 & 50 & -0.0010 & 0.0481 & 0.001 \\
0.4 & 0.4 & 0.0 & 1:1 & 100 & 0.0004 & 0.0496 & 0.000 \\
0.4 & 0.4 & 0.0 & 1:1 & 200 & 0.0003 & 0.0502 & 0.000 \\
0.4 & 0.4 & 0.0 & 1:2 & 50 & 0.0009 & 0.0491 & 0.000 \\
0.4 & 0.4 & 0.0 & 1:2 & 100 & 0.0006 & 0.0493 & 0.000 \\
0.4 & 0.4 & 0.0 & 1:2 & 200 & 0.0010 & 0.0510 & 0.000 \\
0.4 & 0.4 & 0.2 & 1:1 & 50 & -0.0013 & 0.0478 & 0.002 \\
0.4 & 0.4 & 0.2 & 1:1 & 100 & -0.0002 & 0.0478 & 0.000 \\
0.4 & 0.4 & 0.2 & 1:1 & 200 & -0.0004 & 0.0489 & 0.000 \\
0.4 & 0.4 & 0.2 & 1:2 & 50 & -0.0015 & 0.0476 & 0.012 \\
0.4 & 0.4 & 0.2 & 1:2 & 100 & -0.0007 & 0.0487 & 0.000 \\
0.4 & 0.4 & 0.2 & 1:2 & 200 & -0.0007 & 0.0493 & 0.000 \\
0.4 & 1.0 & 0.0 & 1:1 & 50 & 0.0002 & 0.0485 & 0.000 \\
0.4 & 1.0 & 0.0 & 1:1 & 100 & 0.0007 & 0.0498 & 0.000 \\
0.4 & 1.0 & 0.0 & 1:1 & 200 & -0.0006 & 0.0495 & 0.000 \\
0.4 & 1.0 & 0.0 & 1:2 & 50 & -0.0002 & 0.0478 & 0.000 \\
0.4 & 1.0 & 0.0 & 1:2 & 100 & 0.0003 & 0.0488 & 0.000 \\
0.4 & 1.0 & 0.0 & 1:2 & 200 & -0.0002 & 0.0491 & 0.000 \\
0.4 & 1.0 & 0.2 & 1:1 & 50 & 0.0015 & 0.0486 & 0.000 \\
0.4 & 1.0 & 0.2 & 1:1 & 100 & 0.0002 & 0.0478 & 0.000 \\
0.4 & 1.0 & 0.2 & 1:1 & 200 & 0.0002 & 0.0501 & 0.000 \\
0.4 & 1.0 & 0.2 & 1:2 & 50 & 0.0015 & 0.0470 & 0.000 \\
0.4 & 1.0 & 0.2 & 1:2 & 100 & -0.0008 & 0.0483 & 0.000 \\
0.4 & 1.0 & 0.2 & 1:2 & 200 & 0.0002 & 0.0488 & 0.000 \\
\end{longtable}
\normalsize
\setlength{\tabcolsep}{6pt}
\renewcommand{\arraystretch}{1.0}

\end{document}